\newtheorem{theorem}{Theorem}
\begin{document}
\title{Optimal Placement of a UAV to Maximize the
	Lifetime of Wireless Devices}

\author{\IEEEauthorblockN{Hazim Shakhatreh}
	\IEEEauthorblockA{Department of
		Electrical and\\Computer Engineering\\
		New Jersey Institute of Technology\\
		Email: hms35@njit.edu}
	\and
	\IEEEauthorblockN{Abdallah Khreishah}
	\IEEEauthorblockA{Department of
		Electrical and\\Computer Engineering\\
		New Jersey Institute of Technology\\
		Email: abdallah@njit.edu}}

\maketitle

% As a general rule, do not put math, special symbols or citations
% in the abstract
\begin{abstract}
Unmanned aerial vehicles (UAVs) can be used as
aerial wireless base stations when cellular networks go down. Prior studies on UAV-based wireless coverage typically consider downlink scenarios from an aerial base station to ground users. In this paper, we consider an uplink scenario under disaster situations (such as earthquakes or floods), when cellular networks are down. We formulate the problem of optimal UAV placement, where the objective is to determine the placement of a single UAV such that the sum of time durations of uplink transmissions is maximized. We prove that the constraint sets of problem can be represented by the intersection of half spheres and the region formed by this intersection is a convex set in terms of two variables. This proof enables us to transform our problem to an optimization problem with two variables. We also prove that the objective function of the transformed problem is a concave function under a restriction on the minimum altitude of the UAV and propose a gradient projection-based algorithm to find the optimal location of the UAV. We validate the analysis by simulations and demonstrate the effectiveness of the proposed algorithm under different cases. 
\end{abstract}
\begin{IEEEkeywords}
	Unmanned aerial vehicles, lifetime of wireless network, emergency response, convex optimization, gradient projection algorithm. 
\end{IEEEkeywords}

\IEEEpeerreviewmaketitle

\section{Introduction}
UAVs can be used to provide wireless coverage during
emergency cases where each UAV serves as an aerial wireless
base station when the cellular network goes down~\cite{shakhatreh2016continuous}. They
can also be used to supplement the ground base station in
order to provide better coverage and higher data rates for the
users~\cite{bor2016efficient}.

Prior studies on UAV-based wireless coverage typically consider downlink scenarios from a UAV to ground users. The authors in~\cite{mozaffari2015drone} investigate the downlink coverage performance of a UAV, where the objective is to find the optimal UAV altitude which leads to the maximum ground coverage and the minimum transmit power. In~\cite{mozaffari2016optimal}, the authors consider the downlink scenario, where the goal is to minimize the total required transmit power of UAVs while satisfying the users’ rate requirements. In~\cite{shakhatreh2017providing,shakhatreh2017efficient,shakhatreh2017maximizing,sawalmeh2017providing}, the authors propose using a UAV to provide wireless coverage for users during emergency cases and special events. Due to the limited transmit power of the UAV, the authors in~\cite{shakhatreh2017indoor} study the problem of minimizing the number of UAVs required to cover the indoor users.

Only few studies consider the uplink scenario in which the ground wireless devices transmit data to a UAV. The authors in~\cite{zeng2016throughput} study the throughput maximization problem in UAV relaying systems by optimizing the source/relay transmit power along with the UAV trajectory, subject to practical mobility constraints. In~\cite{yang2017energy}, the authors present a UAV enabled data collection system, where a UAV is dispatched to collect a given amount of data from ground
terminals at fixed location. They aim to find the optimal ground terminal transmit power and UAV trajectory that achieve
different Pareto optimal energy trade-offs between the ground terminal and the UAV.

Under disaster situations (such as earthquakes or floods), users may not be able to communicate with remote-undamaged terrestrial ground stations due to the limited transmit power of wireless devices. They are also not able to recharge their wireless devices due to physical damage to energy infrastructure. In the case of Hurricane Katrina, about 700,000 customers in Louisiana and almost 200,000 in Mississippi lost power~\cite{kwasinski2006hurricane}. In such situations, providing wireless coverage becomes more important, since people in the disaster area seek information for themselves to learn about the emergency event, locate their family and friends, and report their safety~\cite{taniguchi2012effect,simon2015socializing}. In this paper, we are motivated to explore if the placement of UAV can enhance the time durations of uplink transmissions of wireless devices. To the best of our knowledge, this is the first work that proposes using a UAV to maximize the sum of time durations of uplink transmissions under disaster situations, where the ground users are not able to recharge their wireless devices due to physical damage to energy infrastructure. We summarize our main contributions as follows. First, we formulate the problem of optimal UAV placement, where the objective is to maximize the lifetime of wireless devices, the lifetime represents the sum of time durations of uplink transmissions. Second, we prove that the constraint sets of problem can be represented by the intersection of half spheres and the region formed by this intersection is a convex set in terms of two variables. This proof enables us to transform our problem into an optimization problem with two variables. Third, we prove that the objective function of the transformed problem is a concave function under a restriction on the minimum altitude of UAV and propose a gradient projection-based algorithm to find the optimal location of UAV.

The rest of this paper is organized as follows. In Section II,
we describe the system model. In Section III, we formulate the problem of UAV placement with an objective of maximizing the sum of time durations of uplink transmissions and present a gradient projection-based algorithm to find the optimal location of the UAV under a restriction on the minimum altitude. Finally, we present our numerical results in Section IV and make concluding remarks in Section V.

\section{SYSTEM MODEL}

Let $(X_u, Y_u, Z_u)$ denote the 3D location of the UAV. We assume that $|I|$ ground users are located according to a probability distribution $f(x, y)$. Each user $i\in I$ has a wireless device with residual energy $E_i$ and the maximum transmit power of each device is $P_{max}$. The users  must be served by a single UAV that acts as aerial base station to collect data from users as shown in Figure~\ref{fig1}. We consider an uplink scenario in which the ground users adopt a frequency division multiple access (FDMA) technique to transmit data to a UAV at a desired data rate $R$. FDMA allocates one subchannel to each user for communications and hence the channels do not interfere with one another. We also assume that each user $i\in I$ is served by a UAV for a time $\tau_i$ seconds and this time depends on the residual energy of wireless device represented by the battery level $E_i$. The time duration of uplink transmission $\tau_i$ must be greater than or equal to $\tau_{th}$.
 
In this paper, we assume that the wireless channel between ground user $i$ and UAV  is line of sight dominated,
so that the free space path loss model is adopted similar to~\cite{zeng2016throughput}~and \cite{yang2017energy}. The path loss is given as follows:
\begin{equation}
\begin{split}
L_i=\left(\dfrac{4\pi d_i f}{c}\right)^2
\end{split}
\end{equation}
where $d_i=\sqrt{(X_u-x_i)^2+(Y_u-y_i)^2+(Z_u)^2}$ is the distance between ground user $i$ and a UAV, $f$ is frequency (in Hz) and $c$ is the speed of light (in m/s). Notice that when the distance between a ground user and UAV (i.e., $d_i$) increases, the transmit power (i.e., $p_i$) increases, while the time duration of uplink transmission (i.e., $\tau_i$) decreases.

\begin{figure}[!t]
	\centering
	\includegraphics[scale=0.25]{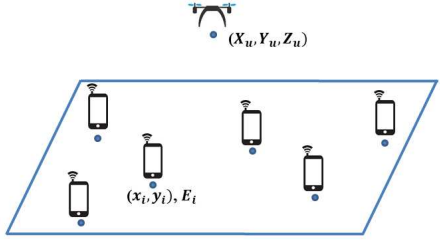}
	\caption{Ground users transmitting data to a UAV}
	\label{fig1}
\end{figure}
\begin{figure*}[t]
		\hspace{1.3cm}
	\begin{minipage}[b]{0.35\linewidth}
		\centering
		\includegraphics[width=\textwidth]{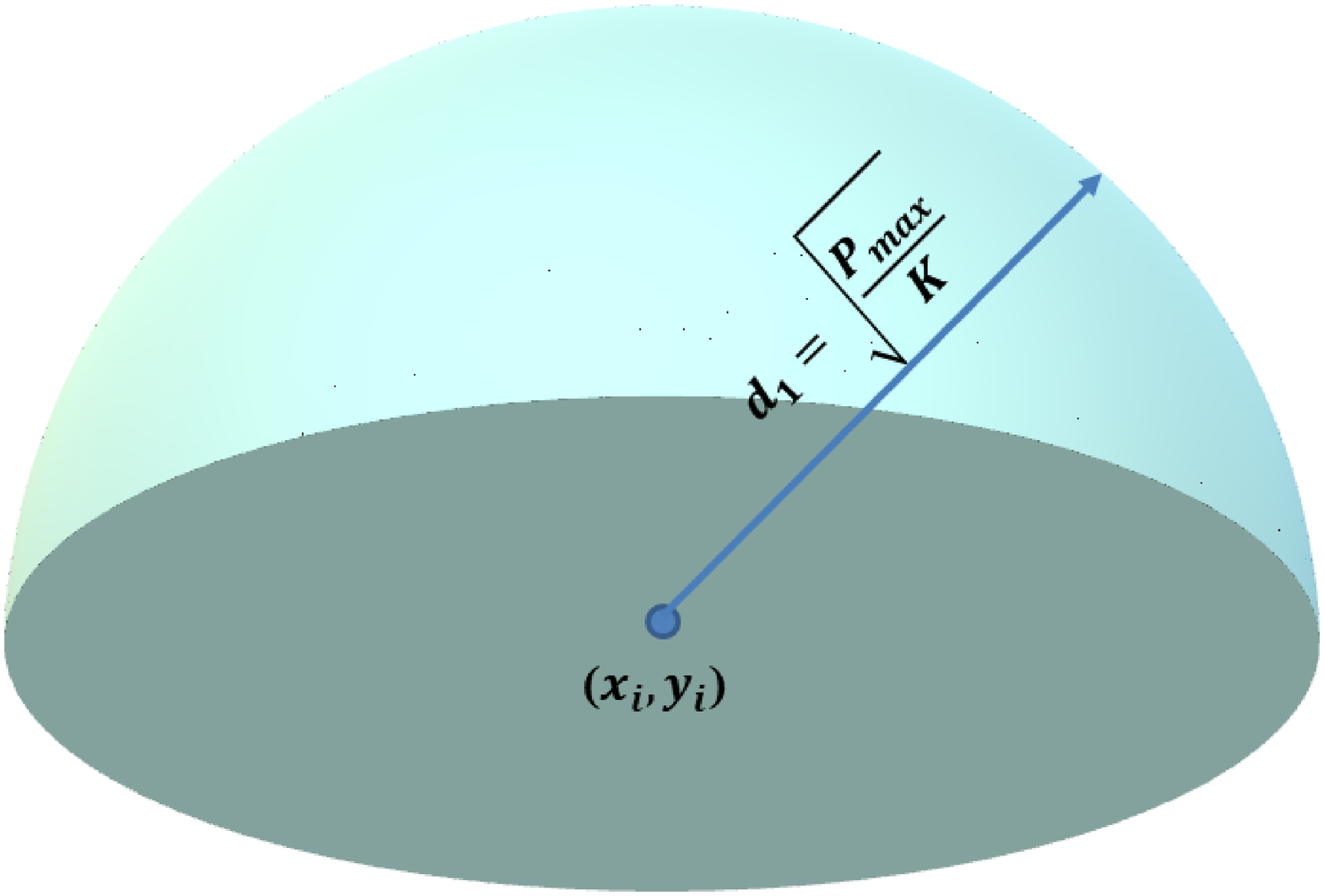}
		\caption{The range of distances that satisfies constraint set (4.a).
		}
		\label{fig2}
	\end{minipage}
	\hspace{2.7cm}
	\begin{minipage}[b]{0.35\linewidth}
		\centering
		\includegraphics[width=\textwidth]{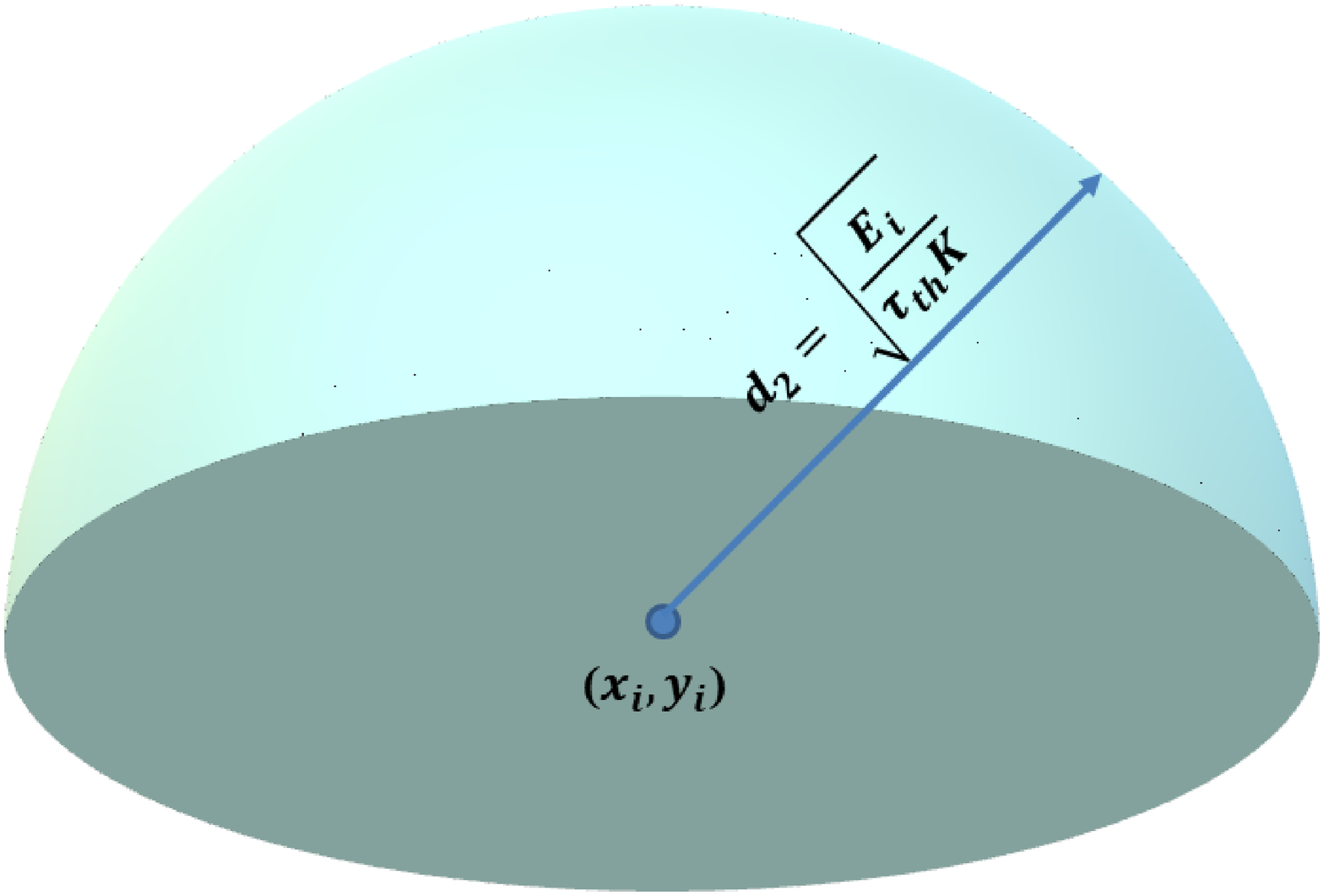}
		\caption{The range of distances that satisfies constraint sets (4.b) and (4.c).
		}
		\label{fig3}
	\end{minipage}
	\vspace{10pt}
\end{figure*}

\section{PROBLEM FORMULATION}
Consider a transmission between a user located at $(x_i, y_i)$ and a UAV located at $(X_u, Y_u, Z_u)$. The rate for user $i$ is given by:
\begin{equation}
\begin{split}
C_{i}=B_ilog_{2}\left(1+\dfrac{p_{i}/L_i}{N}\right)
\end{split}
\end{equation}
where $B_i$ is the transmission bandwidth of user $i$, $p_{i}$ is the transmit power from user $i$ to the UAV, $L_i$ is the path loss between user $i$ and the UAV and $N$ is the noise power. 

Let us assume that all users have the same data rate $R$ and each user has a channel with bandwidth equals $B/|I|$, where $B$ is the UAV bandwidth and $|I|$ is the number of ground users. Then, the minimum power required to satisfy this rate for each user is given by:
\begin{equation}
\begin{split}
p_{i}=\left(2^{\frac{R.|I|}{B}}-1\right).N.L_i
\end{split}
\end{equation}
Our goal is to find the optimal location of the UAV such that the lifetime of wireless devices defined by $T$ is maximized. Here, the lifetime $T$ represents the sum of time durations of uplink transmissions. Our problem can be formulated as:
\begin{equation}
\begin{split}
\max_{(X_{u}, Y_{u}, Z_{u}),\tau_i} T=\sum_{i=1}^{\arrowvert I \arrowvert}\tau_i~~~~~~~~~~~~~~~~~~~~~~~~~~~~~~~~~\\
subject ~to~~~~~~~~~~~~~~~~~~~~~~~~~~~~~~~~~~~~~~~~~~~~~~~~~~~~~~~~~~~\\
\left(2^{\frac{R.|I|}{B}}-1\right).N.L_i \leq P_{max}~~~~~~~~~~~~~~\forall i \in I~~~~~(4.a)\\
\tau_i \geq \tau_{th}~~~~~~~~~~~~~~~~~\forall i \in I~~~~~(4.b)\\
\tau_i.\left(2^{\frac{R.|I|}{B}}-1\right).N.L_i \leq E_{i}~~~~~~~~~~~~~~~~~\forall i \in I~~~~~(4.c)\\
x_{min}\leq X_{u}\leq x_{max}~~~~~~~~~~~~~~~~~~~~~~(4.d)\\
y_{min}\leq Y_{u}\leq y_{max}~~~~~~~~~~~~~~~~~~~~~~(4.e)\\
z_{min}\leq Z_{u}\leq z_{max}~~~~~~~~~~~~~~~~~~~~~~(4.f)\\
\end{split}
\end{equation}
Here, constraint set (4.a) ensures that the transmit power of each wireless device should not exceed its maximum transmit power $P_{max}$. Constraint set (4.b) guarantees that each ground user $i\in I$ is served by UAV for a time greater than $\tau_{th}$ seconds. Constraint set (4.c) ensures
that the total energy consumed by user's device should not exceed
its battery energy level $E_i$. Constraints (4.d-4.f) represent the minimum and maximum allowed values for $X_{u}$, $Y_{u}$ and $Z_{u}$.

From (1), we can notice that the optimal altitude of UAV that maximizes the lifetime of wireless devices is equal to $z_{min}$, which could correspond to the minimum altitude due to safety consideration~\cite{yang2017energy}. Now, our objective is to find the optimal 2D placement of the UAV such that the lifetime of wireless devices is maximized. Even though the problem has a number of nonlinear constraints, we can transform (4) to an optimization problem with two variables by proving that the constraint sets (4.a-4.c) can be represented by the intersection of half spheres and the region formed by this intersection is a convex set in terms of $(X_u,Y_u)$.
 \begin{theorem}
 	The constraint sets (4.a-4.c) can be represented by the intersection of half spheres and the region formed by this intersection is a convex set in terms of $(X_u,Y_u)$. 
 \end{theorem}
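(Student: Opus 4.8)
The plan is to fix the altitude at $Z_u=z_{min}$, which is optimal for the objective by the observation following~(1), and then to show that after the slack variables $\tau_i$ are eliminated, each user contributes one or two quadratic inequalities in $(X_u,Y_u,Z_u)$ of the form $d_i^2\le r_i^2$. Together with $Z_u\ge 0$, each such inequality describes a half sphere, and at the fixed altitude $Z_u=z_{min}$ it becomes a disk in the plane. The feasible set for $(X_u,Y_u)$ is then the intersection of these disks with the rectangle defined by~(4.d)--(4.e), and a finite intersection of convex sets is convex.

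First I would substitute the path-loss model~(1). Writing $\kappa=\left(2^{R|I|/B}-1\right)N(4\pi f/c)^2>0$, we have $\left(2^{R|I|/B}-1\right)N L_i=\kappa\, d_i^2$ with $d_i^2=(X_u-x_i)^2+(Y_u-y_i)^2+Z_u^2$. Then constraint set~(4.a) is simply
\begin{equation*}
(X_u-x_i)^2+(Y_u-y_i)^2+Z_u^2\le \frac{P_{max}}{\kappa}, \qquad \forall i\in I,
\end{equation*}
which is a ball centered at $(x_i,y_i,0)$; intersected with the half-space $Z_u\ge 0$ it is a half sphere.

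Next I would handle constraint sets~(4.b) and~(4.c) jointly. The key observation is that $T=\sum_i\tau_i$ is strictly increasing in each $\tau_i$, and $\tau_i$ appears only in~(4.b)--(4.c); hence, for a fixed UAV position, a choice of $\tau_i$ satisfying both exists exactly when the upper bound from~(4.c) is at least the lower bound $\tau_{th}$ from~(4.b), i.e. when $\tau_{th}\,\kappa\, d_i^2\le E_i$, i.e.
\begin{equation*}
(X_u-x_i)^2+(Y_u-y_i)^2+Z_u^2\le \frac{E_i}{\tau_{th}\,\kappa}, \qquad \forall i\in I,
\end{equation*}
again a half sphere once we impose $Z_u\ge 0$. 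Thus constraints~(4.a)--(4.c) are represented by the intersection of these half spheres (equivalently, one ball per user of radius $r_i^2=\min\{P_{max},\,E_i/\tau_{th}\}/\kappa$). Setting $Z_u=z_{min}$ turns each into the disk $\{(X_u,Y_u):(X_u-x_i)^2+(Y_u-y_i)^2\le r_i^2-z_{min}^2\}$ --- empty, hence trivially convex, when $r_i^2<z_{min}^2$, and a genuine convex disk otherwise. Since~(4.d)--(4.e) define a convex rectangle and~(4.f) is subsumed by $Z_u=z_{min}$, the feasible region in $(X_u,Y_u)$ is a finite intersection of convex sets, hence convex.

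The step I expect to be the main obstacle is the elimination of the $\tau_i$: one has to argue precisely that projecting the joint feasible set of $\big((X_u,Y_u,Z_u),\{\tau_i\}\big)$ onto the UAV coordinates replaces~(4.b)--(4.c) by exactly the single inequality $d_i^2\le E_i/(\tau_{th}\kappa)$. This matters because the product $\tau_i L_i$ in~(4.c) is not jointly convex, so convexity is only recovered after the $\tau_i$ have been removed; the remainder is elementary algebra together with the standard fact that intersections of convex sets are convex.
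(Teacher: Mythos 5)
Your proposal is correct and follows essentially the same route as the paper: substitute the free-space path loss to get $p_i=Kd_i^2$, turn (4.a) into a half sphere of radius $\sqrt{P_{max}/K}$ and the pair (4.b)--(4.c) into a half sphere of radius $\sqrt{E_i/(\tau_{th}K)}$, take the per-user minimum radius, and invoke the fact that an intersection of convex sets is convex. Your explicit projection argument for eliminating the $\tau_i$ (a feasible $\tau_i$ exists iff $\tau_{th}\kappa d_i^2\le E_i$) is a slightly more careful rendering of the step the paper states tersely as ``$p_i$ must be less than $E_i/\tau_{th}$,'' but it is the same idea.
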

 \begin{proof}
 	From (1) and (3), the transmit power of ground user $i$ is given by:
 	\begin{equation}
 	\begin{split}
 	p_i=\left(2^{\frac{R.|I|}{B}}-1\right).N.\left(\dfrac{4\pi d_i f}{c}\right)^2=Kd^2_i
 	\end{split}
 	\end{equation}
 	where $K$ is a constant and equals $\left(2^{\frac{R.|I|}{B}}-1\right). N.\left(\dfrac{4\pi f}{c}\right)^2$. Now, to satisfy constraint set (4.a), $p_i$ must be less than $P_{max}$. From (5), the range of distances $d_1$ that satisfies the constraint set (4.a) is given by:
 	 	\begin{equation}
 	 	d_1\leq\sqrt{\dfrac{P_{max}}{K}}
 	 	 	\end{equation}
 	 The range of distances $d_1$ represents a half sphere with radius $\sqrt{\dfrac{P_{max}}{K}}$ as shown in Figure 2. To satisfy constraint sets (4.b) and (4.c), $p_i$ must be less than $\dfrac{E_i}{\tau_{th}}$. From (5), the range of distances $d_2$ that satisfies constraint sets (4.b) and (4.c) is given by:
 	 \begin{equation}
 	 d_2\leq\sqrt{\dfrac{E_i}{\tau_{th}K}}
 	 \end{equation}

  The range of distances $d_2$ also represents a half sphere with radius $\sqrt{\dfrac{E_i}{\tau_{th}K}}$ as shown in Figure 3. For each ground user $i$, the transmit power $p_i$ and the time duration of uplink transmission $\tau_{i}$ that satisfy the constraint sets (4.a)-(4.c) can be represented by a half sphere with radius:
   	 \begin{equation}
   min\left\{\sqrt{\dfrac{P_{max}}{K}},\sqrt{\dfrac{E_i}{\tau_{th}K}}\right\}
 	 \end{equation}
 	 The half sphere is a convex set and the intersection of convex sets is also a convex~\cite{boyd2004convex}.
 \end{proof}
 From Theorem 1, we can represent the transmit power $p_i$ and the time duration of uplink transmission $\tau_{i}$ constraints for each user as a half sphere. The intersection of all half spheres represents the convex set $V$ that satisfies the constraint sets (4.a)-(4.c) for all users. Therefore, we restrict the placement of UAV to be in $V$. Now, we can transform our problem to an optimization problem with two variables $(X_u, Y_u)$, where $(X_u, Y_u)$ represent the 2D placement of UAV. The proposed algorithm to find the convex set $V$ is shown in Algorithm 1 as follows: The inputs are the locations of users, the maximum transmit power, the energy of each wireless device, the data rate, the total bandwidth, the operating frequency, the noise power and the threshold time duration of uplink transmission. In steps (10-12), we find the range of distances that satisfies the maximum power constraint. On the other hand, steps (13-16) find the range of distances that satisfies the threshold time duration of uplink transmission. In steps (17-21), we find the convex set $V$.

\begin{algorithm}
	\begin{algorithmic}
		\STATE 1: \textbf{Input:}
		\STATE 2: The locations of $|I|$ ground users.
		\STATE 3: The maximum transmit power of wireless device $P_{max}$.
		\STATE 4: The energy of each wireless device ${E_i}$.
		\STATE 5: The data rate $R$.
		\STATE 6: The total bandwidth $B$.
		\STATE 7: The noise $N$.
		\STATE 8: The operating frequency $f$.
		\STATE 9: The threshold time duration of uplink transmission. $\tau_{th}$.
		\STATE 10: Find $K=(2^{\frac{R.|I|}{B}}-1).N.(\frac{4\pi f}{c})^2$
		\STATE 11: For each user, the range of distances that satisfies the maximum power constraint is given by:
		\STATE 12: $d_1\leq\sqrt{\frac{P_{max}}{K}}$
		\STATE 13: The range of distances that satisfies the threshold time duration of uplink transmission is given by:
		\STATE 14: \textbf{For} $i=1:|I|$
		\STATE 15:~~~~~$d_{2,i}\leq\sqrt{\frac{E_i}{\tau_{th}K}}$
		\STATE 16:\textbf{End}
		\STATE 17: For each user, the range of distances that satisfies the problem constraints is given by:
		\STATE 18: \textbf{For} $i=1:|I|$
		\STATE 19:~~~~~$d_i=min\{d_1,d_{2,i}\}$
		\STATE 20:\textbf{End}
		\STATE 21:The convex set $V$ is given by:
		\STATE $\bigcap\limits_{i=1}^{|I|} \{(X_u,Y_u)\in \textbf{R}^2|\sqrt{(X_u-x_i)^2+(Y_u-y_i)^2+z^2_{min}}\leq d_i\}$
	\end{algorithmic}
	\caption{The Convex Set Algorithm}
\end{algorithm}

Now, the objective function in (4) can be represented as:
\begin{equation}
\begin{split}
\sum_{i=1}^{\arrowvert I \arrowvert}\tau_i=\sum_{i=1}^{\arrowvert I \arrowvert}\dfrac{E_i}{p_i}=\sum_{i=1}^{\arrowvert I \arrowvert}\dfrac{E_i}{(2^{\frac{R.|I|}{B}}-1).N.L_i}=~~~~~~~~~~\\
\sum_{i=1}^{\arrowvert I \arrowvert}\dfrac{E_i}{(2^{\frac{R.|I|}{B}}-1).N.(\dfrac{4\pi d_i f}{c})^2}=\sum_{i=1}^{\arrowvert I \arrowvert}\dfrac{E_i}{Kd^2_i}~~~~~~~~~~~~~~
\end{split}
\end{equation}
Since $K$ is constant, our problem can be formulated as:
\begin{equation}
\begin{split}
\max_{(X_{u}, Y_{u})} \sum_{i=1}^{\arrowvert I \arrowvert}\dfrac{E_i}{(X_u-x_i)^2+(Y_u-y_i)^2+z^2_{min}}~~~~~~~~~~~~~~~~~~~~~~~~~~~~~~~~~\\
subject ~to~~~~~~~~~~~~~~~~~~~~~~~~~~~~~~~~~~~~~~~~~~~~~~~~~~~~~~~~~~~~~~~~~~~~~~~~~~\\
(X_{u}, Y_{u}) \in~~~~~~~~~~~~~~~~~~~~~~~~~~~~~~~~~~~~~~~~~~~~~~~~~~~~~~~~~~~~~~~~~~~~~~~~~\\
 \bigcap\limits_{i=1}^{|I|} \{(X_u,Y_u)\in \textbf{R}^2|\sqrt{(X_u-x_i)^2+(Y_u-y_i)^2+z^2_{min}}\leq d_i\}~~~~~~~~~~~~~~~\\
\end{split}
\end{equation}
The transformed problem (10) is equivalent to problem (4). In the next theorem, we prove that the objective function is concave under a restriction on the minimum altitude of UAV $z_{min}$. This theorem enables us to find the optimal $(X_u, Y_u)$ placement for UAV. 
\begin{theorem}
	The objective function of (10) is concave if the minimum altitude of UAV $z_{min}$ is greater than $\sqrt{3}d_{max}$.
\end{theorem}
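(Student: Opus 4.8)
The plan is to prove concavity termwise. The objective of (10) is $\sum_{i} g_i$ with $g_i(X_u,Y_u) = E_i/s_i$, where $s_i := (X_u-x_i)^2+(Y_u-y_i)^2+z_{min}^2 > 0$, and since a sum of concave functions is concave it suffices to show that each $g_i$ is concave on the feasible region. Because concavity on a convex set is equivalent to the Hessian being negative semidefinite throughout it~\cite{boyd2004convex}, I would compute $\nabla^2 g_i$ directly. Writing $a = X_u-x_i$ and $b = Y_u-y_i$, a short differentiation gives
\[
\nabla^2 g_i \;=\; \frac{-2E_i}{s_i^{3}}\, M_i,\qquad
M_i \;=\; \begin{pmatrix} s_i - 4a^2 & -4ab \\ -4ab & s_i - 4b^2 \end{pmatrix}.
\]
Since the scalar prefactor $-2E_i/s_i^{3}$ is negative, $\nabla^2 g_i \preceq 0$ if and only if $M_i \succeq 0$.

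I would then apply the standard $2\times 2$ criterion: $M_i \succeq 0$ iff $\operatorname{tr} M_i \ge 0$ and $\det M_i \ge 0$. Substituting $a^2+b^2 = s_i - z_{min}^2$ yields $\operatorname{tr} M_i = 4z_{min}^2 - 2s_i$ and $\det M_i = s_i\,(4z_{min}^2 - 3s_i)$. As $s_i > 0$, the determinant condition is the binding one: $\det M_i \ge 0 \iff s_i \le \tfrac43 z_{min}^2 \iff a^2+b^2 \le \tfrac13 z_{min}^2$, and whenever this holds one also has $s_i \le \tfrac43 z_{min}^2 \le 2z_{min}^2$, so $\operatorname{tr} M_i \ge 0$ follows automatically. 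Hence $g_i$ is concave at $(X_u,Y_u)$ exactly when the horizontal UAV--user distance $\sqrt{a^2+b^2}$ is at most $z_{min}/\sqrt 3$.

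To finish, I would note that every feasible $(X_u,Y_u)$ lies in the intersection set $V$ of Theorem 1, on which the horizontal distance to each user is bounded by $d_{max}$; the hypothesis $z_{min} > \sqrt 3\, d_{max}$ then gives $\sqrt{a^2+b^2} \le d_{max} < z_{min}/\sqrt 3$ for every user $i$ and every feasible point, so each $M_i$ is positive definite there, each $g_i$ is (strictly) concave on the feasible region, and therefore the objective $\sum_i g_i$ is concave, as claimed.

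I expect the only real work to be the Hessian computation together with keeping the $\det M_i$ algebra tidy; the conceptual point — and the reason the threshold is $\sqrt 3\, d_{max}$ and not something larger — is that the determinant, not the trace, controls concavity, so the requirement $a^2+b^2 \le z_{min}^2/3$ (equivalently $z_{min} \ge \sqrt 3\, d_{max}$) is exactly what is needed; a trace-only estimate would merely give the weaker $\sqrt{a^2+b^2} < z_{min}$.
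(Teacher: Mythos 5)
Your proposal is correct and follows essentially the same route as the paper: termwise concavity via the second-order condition, with the Hessian's determinant condition identified as the binding one, which yields exactly the requirement $(X_u-x_i)^2+(Y_u-y_i)^2\leq z^2_{min}/3$ and hence the threshold $z_{min}>\sqrt{3}\,d_{max}$. Factoring the Hessian as $\frac{-2E_i}{s_i^3}M_i$ and using the trace--determinant test is just a tidier packaging of the paper's three conditions (its condition (c) implying (a) and (b) corresponds to your observation that the determinant dominates the trace), and your appeal to the bound $\sqrt{(X_u-x_i)^2+(Y_u-y_i)^2}\leq d_{max}$ on the feasible region is at the same level of rigor as the paper's.
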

\begin{proof}
	We know that the nonnegative weighted sums preserve the concavity of function~\cite{boyd2004convex}. Since $E_i>0,\forall i\in I$, we need to prove that (11) is a concave function. 
	\begin{equation}
	\begin{split}
	f=\dfrac{1}{(X_u-x_i)^2+(Y_u-y_i)^2+z^2_{min}}, \forall i\in I
	\end{split}
	\end{equation}
	Using the second order condition, the function $f$ is concave if the Hessian is negative semidefinite~\cite{boyd2004convex}. Now, the Hessian is negative semidefinite if we satisfy these conditions:
		\begin{equation}
		\begin{split}
		(a) \dfrac{d^2f}{dX^2_u}\leq 0,~~~~~\forall i\in I~~~~~~~~~~~~~~~~~~~~~~~~~~~~~~~~~~~~\\
		(b) \dfrac{d^2f}{dY^2_u}\leq 0,~~~~~\forall i\in I~~~~~~~~~~~~~~~~~~~~~~~~~~~~~~~~~~~~\\
		(c) \dfrac{d^2f}{dX^2_u}\dfrac{d^2f}{dY^2_u}-(\dfrac{d^2f}{dXdY_u})^2\geq 0,~~~~~\forall i\in I~~~~~~~~~~~~~\\
		\end{split}
		\end{equation}
		To check the first condition, we need to find $\dfrac{d^2f}{dX^2_u}$:
		\begin{equation}
		\begin{split}
		\dfrac{df}{dX_u}=\dfrac{-2(X_u-x_i)}{((X_u-x_i)^2+(Y_u-y_i)^2+z^2_{min})^2}~~~~~~~~~~~~~\\
		\dfrac{d^2f}{dX^2_u}=\dfrac{-2((X_u-x_i)^2+(Y_u-y_i)^2+z^2_{min})^2}{((X_u-x_i)^2+(Y_u-y_i)^2+z^2_{min})^4}+~~~~~~\\
		\dfrac{8(X_u-x_i)^2((X_u-x_i)^2+(Y_u-y_i)^2+z^2_{min})}{((X_u-x_i)^2+(Y_u-y_i)^2+z^2_{min})^4}\\
		=\dfrac{-2((X_u-x_i)^2+(Y_u-y_i)^2+z^2_{min})+8(X_u-x_i)^2}{((X_u-x_i)^2+(Y_u-y_i)^2+z^2_{min})^3}\\
		=\dfrac{6(X_u-x_i)^2-2(Y_u-y_i)^2-2z^2_{min}}{((X_u-x_i)^2+(Y_u-y_i)^2+z^2_{min})^3}~~~~~~~~~~~~~~~~~~~~
		\end{split}
		\end{equation}
		From (13), $\dfrac{d^2f}{dX^2_u}\leq 0,\forall i\in I$ if:
		\begin{equation}
		\begin{split}
		z^2_{min}>3(X_u-x_i)^2-(Y_u-y_i)^2,\forall i\in I~~~~~~~~~~~~~
	\end{split}
	\end{equation}
	Similarly, $\dfrac{d^2f}{dY^2_u}\leq 0,\forall i\in I$ if:
	\begin{equation}
	\begin{split}
	z^2_{min}>3(Y_u-y_i)^2-(X_u-x_i)^2,\forall i\in I~~~~~~~~~~~~~
	\end{split}
	\end{equation}
	To check the third condition, we need to find $\dfrac{d^2f}{dX^2_u}\dfrac{d^2f}{dY^2_u}-(\dfrac{d^2f}{dXdY_u})^2$:
	\begin{equation}
	\begin{split}
	\dfrac{d^2f}{dX_udY_u}=\dfrac{8(X_u-x_i)(Y_u-y_i)}{((X_u-x_i)^2+(Y_u-y_i)^2+z^2_{min})^3}
	\end{split}
	\end{equation}
	From (16), we get:
	\begin{equation*}
	\begin{split}
	\dfrac{d^2f}{dX^2_u}\dfrac{d^2f}{dY^2_u}-(\dfrac{d^2f}{dXdY_u})^2=~~~~~~~~~~~~~~~~~~~~~~~~~~~~~~~~~~~\\
	(\dfrac{-2((X_u-x_i)^2+(Y_u-y_i)^2+z^2_{min})+8(X_u-x_i)^2}{((X_u-x_i)^2+(Y_u-y_i)^2+z^2_{min})^3}\times\\
	\dfrac{-2((X_u-x_i)^2+(Y_u-y_i)^2+z^2_{min})+8(Y_u-y_i)^2}{((X_u-x_i)^2+(Y_u-y_i)^2+z^2_{min})^3})- \\
	\dfrac{64(X_u-x_i)^2(Y_u-y_i)^2}{((X_u-x_i)^2+(Y_u-y_i)^2+z^2_{min})^6}=~~~~~~~~~~~~~~~~~~~\\
	\dfrac{4((X_u-x_i)^2+(Y_u-y_i)^2+z^2_{min})^2}{((X_u-x_i)^2+(Y_u-y_i)^2+z^2_{min})^6}-~~~~~~~~~~~~~~~~~\\
	\dfrac{16(Y_u-y_i)^2((X_u-x_i)^2+(Y_u-y_i)^2+z^2_{min})}{((X_u-x_i)^2+(Y_u-y_i)^2+z^2_{min})^6}-~~~~\\
	\dfrac{16(X_u-x_i)^2((X_u-x_i)^2+(Y_u-y_i)^2+z^2_{min})}{((X_u-x_i)^2+(Y_u-y_i)^2+z^2_{min})^6}+~~~\\
		\end{split}
		\end{equation*}
		\begin{equation}
		\begin{split}
	\dfrac{64(X_u-x_i)^2(Y_u-y_i)^2}{((X_u-x_i)^2+(Y_u-y_i)^2+z^2_{min})^6}-~~~~~~~~~~~~~~~~~~\\
	\dfrac{64(X_u-x_i)^2(Y_u-y_i)^2}{((X_u-x_i)^2+(Y_u-y_i)^2+z^2_{min})^6}=~~~~~~~~~~~~~~~~~~\\
	\dfrac{4((X_u-x_i)^2+(Y_u-y_i)^2+z^2_{min})}{((X_u-x_i)^2+(Y_u-y_i)^2+z^2_{min})^5}+~~~~~~~~~~~~~~~~~~\\
	\dfrac{-16(X_u-x_i)^2-16(Y_u-y_i)^2}{((X_u-x_i)^2+(Y_u-y_i)^2+z^2_{min})^5}=~~~~~~~~~~~~~~~~~~\\
	\dfrac{-12(X_u-x_i)^2-12(Y_u-y_i)^2+4z^2_{min}}{((X_u-x_i)^2+(Y_u-y_i)^2+z^2_{min})^5}~~~~~~~~~~~~~~~~\\
	\end{split}
	\end{equation}
	From (17), $\dfrac{d^2f}{dX^2_u}\dfrac{d^2f}{dY^2_u}-(\dfrac{d^2f}{dXdY_u})^2\geq 0,\forall i\in I$ if:
	\begin{equation}
	\begin{split}
	z^2_{min}>3(X_u-x_i)^2+3(Y_u-y_i)^2,\forall i\in I~~~~~~~~~~~~~
	\end{split}
	\end{equation}
	From (14), (15) and (18), the Hessian is negative semidefinite if we satisfy these conditions:
	\begin{equation}
	\begin{split}
	(a) z^2_{min}>3(X_u-x_i)^2-(Y_u-y_i)^2,\forall i\in I~~~~~~~~~~~~~~~~\\
	(b) z^2_{min}>3(Y_u-y_i)^2-(X_u-x_i)^2,\forall i\in I~~~~~~~~~~~~~~~~\\
	(c) z^2_{min}>3(X_u-x_i)^2+3(Y_u-y_i)^2,\forall i\in I~~~~~~~~~~~~~~~
		\end{split}
		\end{equation}
		\begin{figure*}[!t]
			\centering
			\includegraphics[scale=0.28]{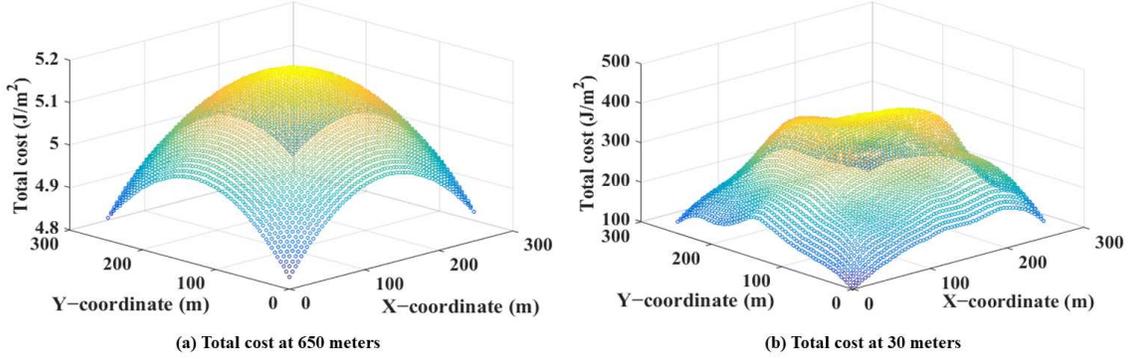}
			\caption{Total cost at different altitudes}
			\label{fig4}
		\end{figure*}
    From the three conditions in (19), we can notice that if we satisfy condition (c), we then satisfy conditions (a) and (b). Let us define $d_{max}$ as a maximum possible 2D distance in the geographical area (i.e., if the users are distributed in a circular geographical area, then $d_{max}$ is equal to the diameter of circle). From condition (c), if $z_{min}>\sqrt{3}d_{max}$ then the objective function of (10) is concave where $d_{max}>\sqrt{(X_u-x_i)^2+(Y_u-y_i)^2},\forall i\in I$. Here, we can notice that the altitude of UAV $z_{min}$ controls the concavity of the objective function.
    		\end{proof}
    		 Theorem 2 enables us to find the optimal $(X_u, Y_u)$ placement for UAV, when the altitude of UAV $z_{min}$ is greater than $\sqrt{3}d_{max}$, which is a practical altitude especially for high altitude platforms. In order to find this point, we propose the Gradient Projection Algorithm~\cite{bertsekas1989parallel}. The gradient algorithm is not applicable to constrained optimization problems, because even if we start inside the feasible
    		 region, an update can take us outside that region. A simple way to solve this problem is to project
    		 back to the feasible region whenever such a situation arises. The gradient projection algorithm
    		 is given by:
    		 \begin{equation}
    		 \begin{split}
    		 (X_u, Y_u)^{n+1}=[(X_u, Y_u)^n+\gamma.\triangledown F((X_u, Y_u)^n))]^+.
    		 \end{split}
    		 \end{equation}
    		 Here, $n$ is the iteration number, $\gamma$ is a positive step size, $\triangledown F$ is the gradient of the objective function in (10) and $[q]^+$ denotes the orthogonal projection of vector $q$ onto convex set $Q$. In particular, $[q]^+$ is defined by: .
    		  \begin{equation}
    		  [q]^+=arg~min_{w \in Q}||w-q||_2
    		  \end{equation}
    		  To find the orthogonal projection, we use the matlab function fmincon. The pseudo code of the gradient projection algorithm is shown in Algorithm 2.
    		  \begin{algorithm}
    		  	\begin{algorithmic}
    		  		\STATE \textbf{Input:}
    		  		\STATE The step tolerance $\epsilon$.
    		  		\STATE The step size $\gamma$.
    		  		\STATE The maximum number of iterations $n_{max}$.
    		  		\STATE \textbf{Initialize} $(X_u, Y_u)$
    		  		\STATE \textbf{For} $n$=1,2,..., $n_{max}$
    		  		\STATE      ~~~		 $(X_u, Y_u)^{n+1}=[(X_u, Y_u)^n+\gamma.\triangledown F((X_u, Y_u)^n))]^+$
    		  		\STATE ~~~~~~~~~~~\textbf{If} $\lVert$ $(X_u, Y_u)^n$ $-$ $(X_u, Y_u)^{n+1}$ $\rVert$ $<$ $\epsilon$
    		  		\STATE ~~~~~\textbf{Return:} \textbf{$(X_u, Y_u)_{opt}$} $=$ $(X_u, Y_u)^{n+1}$
    		  		\STATE \textbf{End for}
    		  	\end{algorithmic}
    		  	\caption{The Gradient Projection Algorithm}
    		  \end{algorithm}
    		   \section{NUMERICAL RESULTS}
    		      		 In this section, we first verify the results of Theorem 2, then we use the gradient projection algorithm to find the optimal placement for UAV under different cases. Table I lists the parameters used in the numerical analysis.
    		      		 \begin{table}[!h]
    		      		 	\scriptsize
    		      		 	\renewcommand{\arraystretch}{1.3}
    		      		 	\caption{Parameters in numerical analysis}
    		      		 	\label{table}
    		      		 	\centering
    		      		 	\begin{tabular}{|c|c|}
    		      		 		\hline
    		      		 		Dimensions of area& $[0,250]\times[0,250]$\\
    		      		 		\hline 
    		      		 		Number of ground users & 200 users\\
    		      		 		\hline
    		      		 			Maximum number of iterations  $n_{max}$ & 100\\
    		      		 			\hline 
    		      		 			Maximum transmit power of wireless device $P_{max}$& 0.5 watt\\
    		      		 			\hline
    		      		 			Energy of each wireless device $E_i$ in joule& 4500+13500*rand(200,1)\\
    		      		 			\hline
    		      		 			Data rate $R$& 4 Mbps\\
    		      		 			\hline
    		      		 Total bandwidth $B$& 50 MHz\\
    		      		 \hline
    		      		 The noise power $N$& $1\times 10^{-14}$\\
    		      		 \hline
    		      		 The carrier frequency $f$ & 4 Ghz\\
    		      		 \hline
    		      		  Threshold time duration of uplink transmission $\tau_{th}$& 900 seconds\\
    		      		 \hline
    		      		 Minimum altitude for UAV& 650 meters\\
    		      		 \hline
    		      		 	\end{tabular}
    		      		 \end{table}
    		      		 
  To verify the results of Theorem 2, we assume that 200 ground users are uniformly distributed in a geographical area of size $250 m \times 250 m$, then we plot the objective function in (10) without any constraints at two different altitudes of the UAV. The first value for altitude $z_{min}$ is 650 meters, which is greater than $\sqrt{3} d_{max}$ and satisfies the condition in Theorem 2. The second value for altitude is 30 meters and it does not satisfy the condition in Theorem 2. In Figure 4.a, we can notice that the objective is concave when the altitude of UAV is equal to 650 meters. On the other hand, the objective function becomes non-concave at 30 meters as shown in Figure 4.b. We can also notice that the objective function at low altitude has better costs compared to results at high altitude, which make our approach practical for UAVs that have high altitude constraints like Project Loon~\cite{levy2013google} by Google .    
  
  In Figure 5, we place the UAV at 650 meters and use the gradient projection algorithm to find the optimal 2D placement that maximizes the lifetime of wireless devices when the ground users are uniformly distributed. The optimal placement for UAV is (131, 128, 650) and the optimal cost is 5.19 $J/m^2$ (282096 seconds). We can notice that the projection of the optimal point is located near the center of deployment region. This is because the devices that have minimum costs are located at the corners of the deployment region and placing the projection of UAV near the center maximizes the total cost. In Figure 6, we use the gradient projection algorithm to find the optimal 2D placement when the ground users are non-uniformly distributed. The optimal placement for UAV is (92, 156, 650) and the optimal cost is 5.22 $J/m^2$ (283727 seconds). We can notice that the placement of UAV is near the high density region.
  \begin{figure*}[!t]
  	\centering
  	\includegraphics[scale=0.3]{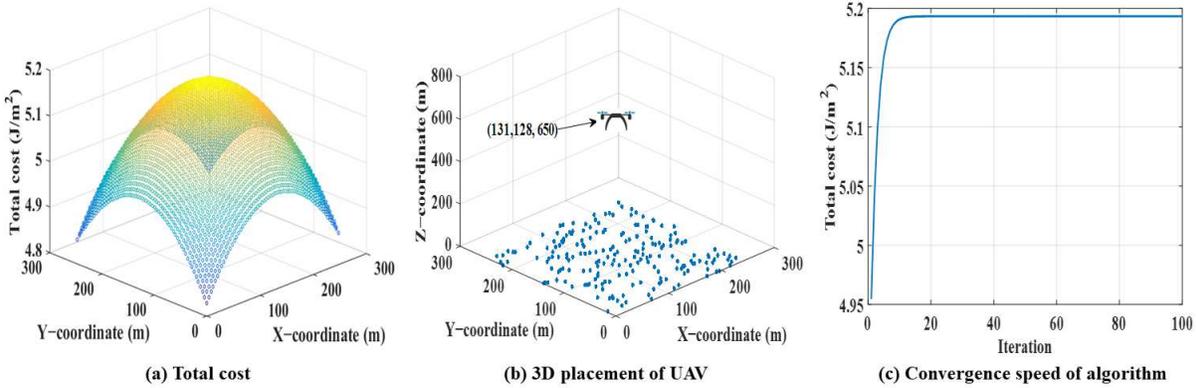}
  	\caption{Simulation results of the uniform distribution case }
  	\label{fig5}
  \end{figure*}
  \begin{figure*}[!t]
  	\centering
  	\includegraphics[scale=0.3]{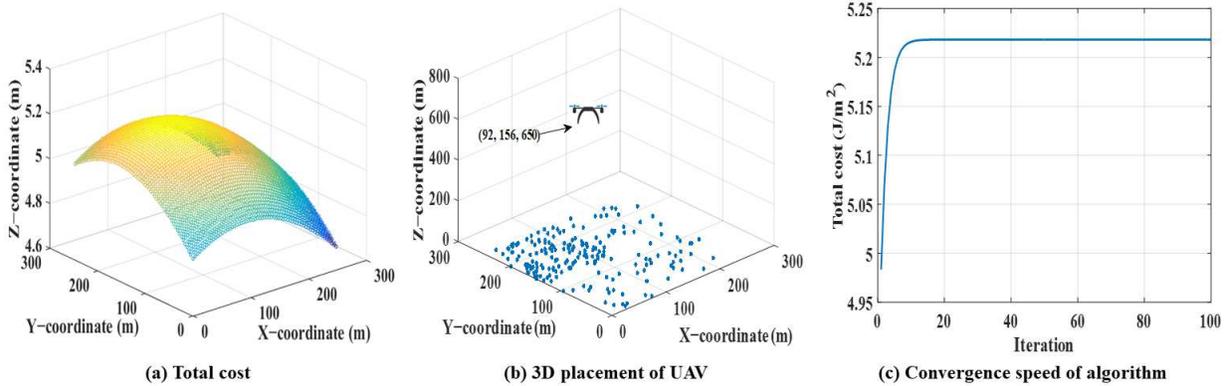}
  	\caption{Simulation results of the non-uniform distribution case }
  	\label{fig6}
  \end{figure*}
  \section{CONCLUSION} 
  In this paper, we study the problem of optimal UAV placement,
  where the objective is to determine the placement of a single UAV
  such that the sum of time durations of uplink transmissions is
  maximized. We prove that the constraint sets of problem can be
  represented by the intersection of half spheres and the region
  formed by this intersection is a convex set in terms of two
  variables. This proof enables us to transform our problem to
  an optimization problem with two variables. We also prove that
  the objective function of the transformed problem is a concave
  function under a restriction on the minimum altitude of the
  UAV and propose a gradient projection-based algorithm to find
  the optimal location of the UAV. As future work, we will study the lifetime maximization problem when multiple UAVs are utilized.
\section*{Acknowledgment}
This work was supported in part by the US NSF grants CNS-1647170 and EEC-1560131.

	\bibliographystyle{IEEEtran}
	\bibliography{UAV}

\end{document}